\documentclass[sigconf,nonacm,natbib=false]{acmart}
\AtBeginDocument{%
  }

\setcopyright{acmlicensed}
\copyrightyear{2026}
\acmYear{2026}
\acmDOI{XXXXXXX.XXXXXXX}
\acmConference[VORTEX'26]{}{July 2026}{Brussels}
\acmISBN{978-1-4503-XXXX-X/2026/06}

\RequirePackage[
  datamodel=acmdatamodel,
  style=acmnumeric,
  ]{biblatex}

\begin{document}

%%
%% The "title" command has an optional parameter,
%% allowing the author to define a "short title" to be used in page headers.
\title{A New Syntax and Semantics for Probabilistic Trace Expressions}

%%
%% The "author" command and its associated commands are used to define
%% the authors and their affiliations.
%% Of note is the shared affiliation of the first two authors, and the
%% "authornote" and "authornotemark" commands
%% used to denote shared contribution to the research.
\author{Davide Ancona}
\affiliation{%
  \institution{University of Genoa, Department of Informatics, Bioengineering, Robotics and Systems Engineering}
  \city{Genoa}
  \country{Italy}}
\email{davide.ancona@unige.it}

\author{Angelo Ferrando}
\affiliation{%
  \institution{University of Modena and Reggio Emilia, Department of Physics, Informatics and Mathematics}
  \city{Modena}
  \country{Italy}}
\email{angelo.ferrando@unimore.it}

\author{Viviana Mascardi}
\affiliation{%
  \institution{University of Genoa, Department of Informatics, Bioengineering, Robotics and Systems Engineering}
  \city{Genoa}
  \country{Italy}}
\email{viviana.mascardi@unige.it}

%%
%% By default, the full list of authors will be used in the page
%% headers. Often, this list is too long, and will overlap
%% other information printed in the page headers. This command allows
%% the author to define a more concise list
%% of authors' names for this purpose.
\renewcommand{\shortauthors}{D. Ancona, A. Ferrando, and V. Mascardi}

%%
%% The abstract is a short summary of the work to be presented in the
%% article.
\begin{abstract}
  Runtime Verification (RV) techniques are typically defined under the assumption of complete observability of system executions. In many realistic settings, however, monitors must operate under partial observability, where events may be lost, delayed, or unobservable. This raises fundamental questions about how to interpret specifications, verdicts, and uncertainty during monitoring.
In this paper, we propose a new syntax and semantics for \emph{Probabilistic Trace Expressions} (PTEs), a formal framework that integrates probabilistic reasoning into the operational semantics of Trace Expressions. Trace Expressions (TE) are a highly expressive specification formalism for runtime verification that we started to develop 15 years ago. Rather than attaching probabilities to syntactic transitions, as we did in the original formulation of PTEs dating back 2022, probabilities are now associated with the set of event types enabled in each semantic state, ensuring semantic consistency beyond finite-state models, and high modularity of the PTE specification.
The PTE framework supports principled reasoning about missing events (gaps), distinguishes between observational and generative probabilistic interpretations -- which represents a more refined semantics w.r.t. the original PTE formulation of 2022 --  and subsumes classical probabilistic models such as Hidden Markov Models. We discuss how PTEs enable belief-based monitoring under uncertainty, illustrate their use in one representative Mars Rover scenario, and reflect on the conceptual implications for runtime verification in partially observable environments.

\end{abstract}

%%
%% The code below is generated by the tool at http://dl.acm.org/ccs.cfm.
%% Please copy and paste the code instead of the example below.
%%
\begin{CCSXML}
<ccs2012>
   <concept>
       <concept_id>10011007</concept_id>
       <concept_desc>Software and its engineering</concept_desc>
       <concept_significance>500</concept_significance>
       </concept>
   <concept>
       <concept_id>10011007.10011074.10011099.10011692</concept_id>
       <concept_desc>Software and its engineering~Formal software verification</concept_desc>
       <concept_significance>500</concept_significance>
       </concept>
   <concept>
       <concept_id>10011007.10011074.10011099.10011693</concept_id>
       <concept_desc>Software and its engineering~Empirical software validation</concept_desc>
       <concept_significance>500</concept_significance>
       </concept>
 </ccs2012>
\end{CCSXML}

\ccsdesc[500]{Software and its engineering}
\ccsdesc[500]{Software and its engineering~Formal software verification}
\ccsdesc[500]{Software and its engineering~Empirical software validation}
%%
%% Keywords. The author(s) should pick words that accurately describe
%% the work being presented. Separate the keywords with commas.
\keywords{Runtime verification, Partial observability, Probabilistic monitoring, Probabilistic Trace Expressions, New Syntax, New Semantics}

%\received{10 April 2026}
%\received[revised]{----}
%\received[accepted]{----}

%%
%% This command processes the author and affiliation and title
%% information and builds the first part of the formatted document.
\maketitle

\newcommand{\emptyseq}{\epsilon}
\newcommand{\Types}{\mathcal{T}}
\newcommand{\Actions}{\mathcal{A}}
\newcommand{\trans}[1]{\stackrel{{#1}}{\rightarrow}}
\newcommand{\transM}[1]{\stackrel{{#1}}{\rightarrow_{\gamma}}}
\newcommand{\transo}[1]{\stackrel{{#1}}{\rightarrow_{o}}}
\newcommand{\transg}[1]{\stackrel{{#1}}{\rightarrow_{g}}}
\newcommand{\transS}[1]{\stackrel{{#1}}{\twoheadrightarrow}}
\newcommand{\ntrans}[1]{\stackrel{{#1}}{\nrightarrow}}
\newcommand{\transmsg}[1]{\stackrel{{#1}}{\Longrightarrow}}
\newcommand{\isEmpty}{\mathit{\epsilon}}
\newcommand{\acts}{\mathit{tr}}
\newcommand{\ping}{\mathit{ping}}
\newcommand{\pong}{\mathit{pong}}
\newcommand{\msgone}{\mathit{msg_1}}
\newcommand{\msgtwo}{\mathit{msg_2}}
\newcommand{\msgthree}{\mathit{msg_3}}
\newcommand{\ackone}{\mathit{ack_1}}
\newcommand{\acktwo}{\mathit{ack_2}}
\newcommand{\ackthree}{\mathit{ack_3}}
\newcommand{\Loop}{\mathit{Forever}}
\newcommand{\AltBitOne}{\mathit{AltBit_1}}
\newcommand{\AltBitTwo}{\mathit{AltBit_2}}
\newcommand{\AltBitThree}{\mathit{AltBit_3}}
\newcommand{\Mone}{\mathit{M_1}}
\newcommand{\Mtwo}{\mathit{M_2}}
\newcommand{\fstOptMone}{\mathit{A\Mone}}
\newcommand{\sndOptMone}{\mathit{B\Mone}}
\newcommand{\fstOptMtwo}{\mathit{A\Mtwo}}
\newcommand{\sndOptMtwo}{\mathit{B\Mtwo}}
\newcommand{\Rule}[4]{\scriptstyle{\textrm{({#1})}}{\displaystyle\frac{#2}{#3}}\ #4}
\newcommand{\satype}{\alpha} %%% sending action type
\newcommand{\sa}{a} %%% sending action

\newcommand{\WAB}{\mathit{WAB}}
\newcommand{\AB}{\mathit{AB}}
\newcommand{\MA}{\mathit{MA}}
\newcommand{\AM}{\mathit{AM}}
\newcommand{\MM}{\mathit{MM}}
\newcommand{\BS}{\mathit{BS}}
\newcommand{\OB}{\mathit{OB}}
\newcommand{\BC}{\mathit{BC}}
\newcommand{\msg}{\mathit{msg}}
\newcommand{\ackType}{\mathit{ack}}
\newcommand{\pmsgone}[1]{\mathit{msg_1^{#1}}}
\newcommand{\pmsgtwo}[1]{\mathit{msg_2^{#1}}}
\newcommand{\packone}[1]{\mathit{ack_1^{#1}}}
\newcommand{\packtwo}[1]{\mathit{ack_2^{#1}}}
\newcommand{\pmsgthree}[1]{\mathit{msg_3^{#1}}}
\newcommand{\packthree}[1]{\mathit{ack_3^{#1}}}
\newcommand{\offer}{\mathit{offer}}
\newcommand{\buy}{\mathit{buy}}
\newcommand{\close}{\mathit{close}}
\newcommand{\closerel}{\triangleright}
\newcommand{\ExtTypes}{\mathcal{T}^+}
\newcommand{\emb}[1]{\mathit{emb}({#1})}
\newcommand{\invemb}[1]{\mathit{emb}^{-1}({#1})}
\newcommand{\extrans}[1]{\stackrel{{#1}}{\rightarrowtail}}
\newcommand{\optrans}[1]{\stackrel{{#1}}{{\rightarrowtail\hspace*{-1em}\longrightarrow}}}
\newcommand{\runset}{\mathcal{R}}
\newcommand{\proj}{\mathit{proj}}
\newcommand{\annot}{\nu}
\newcommand{\pth}{p}
\newcommand{\asa}{b}
\newcommand{\eventSet}{\mathcal{E}}
\newcommand{\eventTy}{\vartheta}
\newcommand{\eventTyProb}[1]{\vartheta[{#1}]}
\newcommand{\anyEvTy}{\mathit{any}}
\newcommand{\noneEvTy}{\mathit{none}}
\newcommand{\ev}{e}
\newcommand{\evs}[1]{\ev_1, ..., \ev_{#1}}
\newcommand{\es}{es}
\newcommand{\aevent}{\mathit{ae}}
\newcommand{\run}{\mathit{run}}
\newcommand{\arun}{\mathit{annot\_run}}
\newcommand{\cover}{C}
\newcommand{\shift}[2]{{#1}\downarrow{#2}}
\newcommand{\State}[2]{\langle {#1}, {#2} \rangle}
\newcommand{\globalState}[3]{\langle {#1}, {#2}, {#3} \rangle}
\newcommand{\gap}[1]{\gapg({#1})}
\newcommand{\gapg}{\mathit{gap}}
\newcommand{\any}[1]{\anyg({#1})}
\newcommand{\anyg}{\mathit{eg}}
\newcommand{\poss}{poss\ET}
\newcommand{\pio}{\pi o}
\newcommand{\pig}{\pi g}
%%% math keywords
\newcommand{\alice}{\mathit{alice}}
\newcommand{\bob}{\mathit{bob}}
\newcommand{\seller}{\mathit{seller}}
\newcommand{\buyer}{\mathit{buyer}}
\newcommand{\tell}{\mathit{tell}}
\newcommand{\price}{\mathit{price}}
\newcommand{\pasta}{\mathit{pasta}}
\newcommand{\pizza}{\mathit{pizza}}
\newcommand{\itemv}{\mathit{item}}
\newcommand{\isItem}{\mathit{isItem}}
\newcommand{\isSeller}{\mathit{isSeller}}
\newcommand{\isBuyer}{\mathit{isBuyer}}
\newcommand{\amount}{\mathit{amount}}
\newcommand{\safe}{\mathit{safe}}
\newcommand{\isEmptyMeth}{\mathit{isEmpty}}
\newcommand{\sizeMeth}{\mathit{size}}
\newcommand{\popMeth}{\mathit{pop}}
\newcommand{\topMeth}{\mathit{top}}
\newcommand{\pushMeth}{\mathit{push}}
\newcommand{\Stack}{\mathit{Stack}}
\newcommand{\NEStack}{\mathit{NEStack}}
\newcommand{\sem}[1]{\llbracket{#1}\rrbracket}
\newcommand{\semThree}[1]{\llbracket{#1}\rrbracket_3}
\newcommand{\Tops}{\mathit{Tops}}

\newcommand{\role}{\mathit{role}}
\newcommand{\hasRole}{\mathit{hasRole}}
\newcommand{\op}{op}
\newcommand{\prefixop}{{:}}
\newcommand{\orop}{{\vee}}
\newcommand{\shuffleop}{{|}}
\newcommand{\catop}{{\cdot}}
\newcommand{\allop}{1}
\newcommand{\emptyop}{0}
\newcommand{\andop}{{\wedge}}
\newcommand{\condop}{{\gg}}
\newcommand{\aorb}{\mathit{a\_or\_b}}
\newcommand{\borc}{\mathit{b\_or\_c}}
\newcommand{\TS}{\mathit{TS}}
\newcommand{\restrict}[2]{{#1}_{\mid{#2}}}
\newcommand{\prefixRel}{\ltimes}
\newcommand{\TE}{\mathit{TE}}

\newtheorem{myexample}{Example}[section]

\newcommand{\Guest}{\mathit{Guest}}
\newcommand{\Access}{\mathit{Access}}
\newcommand{\Admin}{\mathit{Admin}}
\newcommand{\RW}{\mathit{RW}}
\newcommand{\login}{\mathit{login}}
\newcommand{\logout}{\mathit{logout}}
\newcommand{\rread}{\mathit{read}}
\newcommand{\wwrite}{\mathit{write}}

\newcommand{\stack}{\mathit{stack}}
\newcommand{\Any}{\mathit{Any}}
\newcommand{\Unsafe}{\mathit{Unsafe}}
\newcommand{\unsafe}{\mathit{unsafe}}

\newcommand{\msgack}{\mathit{msg\_ack}}
\newcommand{\AltBit}{\mathit{AltBit}}

\newcommand{\AP}{\mathit{AP}}
\newcommand{\true}{\mathit{true}}
\newcommand{\until}[2]{{#1}\,U{#2}}
\newcommand{\Inf}{\mathit{Inf}}
\newcommand{\ET}{\mathcal{ET}}

\newcommand{\Proj}{\texttt{Proj}}
\newcommand{\co}{\texttt{CuriosOne}}
\newcommand{\coo}{\texttt{CuriosOnes}}
\newcommand{\gc}{\texttt{GroundControl}}
\newcommand{\mw}{\texttt{MarsWarehouse}}

\newcommand{\new}[1]{{\color{red}{#1}}}

\newcommand{\viv}[1]{\begin{center}\fbox{\parbox{\textwidth}{\textcolor{blue}{Viviana: {#1}}}}\end{center}}

\section{Introduction}
\label{intro-sec}

The need of developing reliable software is as old as the software itself. 

The notion of reliability of software applications has evolved over time. 
In the mid seventies, Linden defined a reliable software as ``a software that provides services which are (1) usable, (2) correct, (3) trustworthy, and (4) available on demand'' \cite{Linden:1976:OSS:356678.356682}.

Ten years later, Laprie stated that dependability and reliability are closely connected, with dependability being more general than reliability. His definition of dependability is ``the quality of the delivered service such that reliance can justifiably be placed on this service'' \cite{laprie1985dependable}.  He also observed that ``a system failure occurs when the delivered service deviates from the specified service, where the service specification is an agreed description of the expected service''. 

Laprie's definition of software failure has been rephrased as ``a deviation between the observed behaviour and
the required behaviour of a software system'' by Delgado et al. \cite{DBLP:journals/tse/DelgadoGR04}.

The ``study, development, and application of those verification techniques that allow checking whether a run of a system under scrutiny
satisfies or violates a given correctness property is named Runtime Verification'' \cite{DBLP:journals/jlp/LeuckerS09}.

Runtime Verification (RV) of complex, distributed systems under ideal conditions (perfect observability of all the relevant events, no leaky communication channels, etc) is a hard task to perform, and has been addressed by many scientific works including surveys and introductory papers \cite{DBLP:series/lncs/BartocciFFR18,pastExp,DBLP:journals/jlp/LeuckerS09}, books \cite{DBLP:series/lncs/10457}, seminars \cite{DBLP:conf/dagstuhl/2010P10451}, and conferences\footnote{\url{http://www.runtime-verification.org/}, accessed on \today.}.
The IC1402 - Runtime Verification beyond Monitoring (ARVI) COST action represented a significant achievement for the RV community\footnote{ARVI IC1402, supported by COST (European Cooperation in Science and Technology) from 17/12/2014 to 16/12/2018, \url{https://www.cost-arvi.eu/}, accessed on \today.}, having been funded to overcome the fragmentation of RV research by designing common input formats for RV tool cooperation and comparison, evaluating different RV tools, and identifying the RV road-map for the next years.

%%%%%%%%%%%%%%%%%%%%%%%%% GAPS %%%%%%%%%%%%%%%%%%%%%%%%%%%%%
%
When the conditions are not ideal and some events that took place in the system are missed by the monitor, generating a \emph{gap} in the trace of observed events, performing RV becomes even harder.
The problem of partial observability and lossy traces started to be addressed less than ten years ago.
%

%%%%%%%%%%%%%%%%%% decentralised %%%%%%%%%%%%%%%%%%%%%%%%%5
Besides partial observability, another issue affecting the efficiency and reliability of the RV approach is centralisation. If only one monitor is in charge for  observing the whole system under scrutiny, that monitor may soon become a bottleneck.

Distributed RV (DRV) addresses this problem. 
As stated in the manifesto of the First Workshop on DRV held in Bertinoro in May 2016\footnote{\url{http://www.labri.fr/perso/travers/DRV2016/}, accessed on \today.}, ``DRV is about designing a fault-tolerant distributed algorithm that monitors another distributed algorithm, with the end goal of developing lightweight software systems that are more efficient that traditional verification techniques.''
% %
As for partial observability, also RV decentralisation has been faced only recently, and by a few scientists.
% %
In 2012 Bauer and Falcone \cite{DBLP:conf/fm/BauerF12} proposed an algorithm for distributing and monitoring LTL formulae, such that satisfaction or violation of specifications can be detected by decentralised monitors; Bartocci extended that work to take into account how to set a sampling time among the components such that their local traces are consistent \cite{Bartocci_2013}.
% %
The first survey on DRV has been published in 2016 \cite{Bonakdarpour2016}. It references 18 papers only, and many of them deal with issues which fall outside the scope of our investigation, such as evaluating the minimum number of opinions needed for fault-tolerant RV of distributed systems \cite{Fraigniaud2014b,Fraigniaud2014a} and wait-free synchronization \cite{Herlihy:1991:WS:114005.102808}. A more recent survey published in 2018 \cite{DBLP:series/lncs/FrancalanzaPS18} referencing more than 90 papers shows that the awareness of the challenges raised by  DRV approaches has dramatically increased in the last few years. \\

Despite this ever growing attention to DRV, to the best of our knowledge no works addressing {\em both} gaps/lossy traces management and DRV exist. This is where our proposal comes into play. In this paper we define a new, simplified syntax for the Probabilistic Trace Expressions (PTEs) formalism we presented in 2022 \cite{DBLP:conf/eumas/AnconaFM22,DBLP:conf/cilc/AnconaFM22} and we show how it can be exploited for RV of system, even in presence of observation gaps. We explore the formal connections among HMMs, LTL, and PTEs, and we suggest that algorithms to perform RV of the system under scrutiny may be decentralised under suitable conditions.
% \\

While decentralised monitoring strategies can indeed be built on top of PTEs, this paper focuses on the foundational semantics and probabilistic reasoning required for runtime verification under partial observability.

The paper is organised as follows: Section \ref{sec:backgroundRel} introduces the background for PTEs and overviews related works. Section \ref{motiv-sec} presents our motivating scenario inspired by ongoing Mars exploration programs and gently introduces PTEs; Section \ref{prob-trace-expr-sec} formalises the new, simplified PTE syntax and two different semantics, {\em ObSem} and {\em GuesSem}, which represent a brand new contribution with respect to previous preliminary works on the topic \cite{DBLP:conf/eumas/AnconaFM22,DBLP:conf/cilc/AnconaFM22}. Section \ref{sec:theory} shows how HMMs and LTL formulae can be both translated into PTEs and demonstrates that a PTE resulting from a HMM enjoys the same properties as the original HMM according to probability propagation via the forward algorithm;  Section \ref{impl-sec} discusses some practical aspects to take into account when dealing with PTEs; Section \ref{concl-sec} concludes and outlines the future directions of our work.

\section{Background and Related Work}
\label{sec:backgroundRel}

\subsection{Trace Expressions}
\label{subsect:TE}

Trace Expressions (TEs) \cite{DBLP:conf/atal/AnconaBFM15,DBLP:conf/dalt/AnconaDM12}
specify sets of event traces over a fixed universe of events $\eventSet$. Events are abstracted into \emph{event types}, drawn from a set $\ET$, each denoting a subset of $\eventSet$. An event $\ev$ matches an event type $\eventTy$ if $\ev \in \eventTy$.

TEs are generated by the following constructs:
\[
\emptyseq \quad
\eventTy \prefixop \tau \quad
\tau_1 \catop \tau_2 \quad
\tau_1 \andop \tau_2 \quad
\tau_1 \orop \tau_2 \quad
\tau_1 \shuffleop \tau_2
\]
denoting, respectively, the empty trace, prefixing, concatenation, intersection, union, and shuffle.
TEs support recursion via finite systems of syntactic equations.

The denotational semantics $\sem{\tau}$ of a TE $\tau$ is the set of traces it denotes.
Operationally, TEs are equipped with a transition relation
$\tau \trans{\ev} \tau'$, meaning that upon observing event $\ev$, the protocol may evolve from state $\tau$ to $\tau'$.

We define the transitive closure of $\trans{\ev}$ as $\tau\trans{evs}\tau_k$ iff $evs = \ev_1~\ev_2~...~\ev_k$ and there exist $\tau_1, \tau_2, ..., \tau_j, ...$ such that
$\tau\trans{\ev_1}\tau_1\trans{\ev_2}\tau_2 .... \trans{\ev_k}\tau_k$.
%\\
%The transitive closure $\tau \trans{evs} \tau'$ is defined as usual.

%\vspace{-0.5cm}
\begin{figure*}[!h]
\begin{center}
\begin{math}
\begin{array}{c}
\Rule{prefix}
{}
{\eventTy\prefixop\tau\trans{\ev}\tau}
{\ev \in \eventTy}
\qquad
\Rule{or-l}
{\tau_1\trans{\ev}\tau'_1}
{\tau_1\orop\tau_2\trans{\ev}\tau'_1}
{}
\qquad
\Rule{or-r}
{\tau_2\trans{\ev}\tau'_2}
{\tau_1\orop\tau_2\trans{\ev}\tau'_2}
{}

\\[6ex]

\Rule{and}
{\tau_1\trans{\ev}\tau'_1\quad\tau_2\trans{\ev}\tau'_2}
{\tau_1\andop\tau_2\trans{\ev}\tau'_1\andop\tau'_2}
{}
\qquad
\Rule{shuffle-l}
{\tau_1\trans{\ev}\tau'_1}
{\tau_1\shuffleop\tau_2\trans{\ev}\tau'_1\shuffleop\tau_2}
{}
\qquad
\Rule{shuffle-r}
{\tau_2\trans{\ev}\tau'_2}
{\tau_1\shuffleop\tau_2\trans{\ev}\tau_1\shuffleop\tau'_2}
{}

\\[6ex]

\Rule{cat-l}
{\tau_1\trans{\ev}\tau'_1}
{\tau_1\catop\tau_2\trans{\ev}\tau'_1\catop\tau_2}
{}
\qquad
\Rule{cat-r}
{\tau_2\trans{\ev}\tau'_2}
{\tau_1\catop\tau_2\trans{\ev}\tau'_2}
{\isEmpty(\tau_1)}
\\[4ex]
\Rule{$\isEmpty$-empty}
{}
{\isEmpty(\emptyseq)}
{}
\qquad
\Rule{$\isEmpty$-or-l}
{\isEmpty(\tau_1)}
{\isEmpty(\tau_1\orop\tau_2)}
{}
\qquad
 \Rule{$\isEmpty$-or-r}
{\isEmpty(\tau_2)}
{\isEmpty(\tau_1\orop\tau_2)}
{}
\qquad
\Rule{$\isEmpty$-others}
{\isEmpty(\tau_1)\quad\isEmpty(\tau_2)}
{\isEmpty(\tau_1 \op\ \tau_2)}
{\op\in\{\shuffleop,\catop,\andop\}}
\end{array}
\end{math}

\end{center}
\caption{Transition system for trace expressions.}\label{trans-fig}
\end{figure*}
%\vspace{-0.5cm}
We restrict attention to \emph{contractive} TEs, where every infinite path contains a prefix operator, ensuring well-defined behaviour.
A TE is \emph{deterministic} if any two states reached by the same event trace denote the same set of traces.

\subsection{Linear-Time Temporal Logic}
\label{LTL-subsect}
Linear-time Temporal Logic (LTL \cite{focs1977-Pnu}) is widely used in model checking \cite{baier2008principles} and runtime verification (RV). For RV, LTL is commonly equipped with a three-valued semantics \cite{LeuckerBS09Tosem} to account for incomplete traces.
An LTL formula is built from atomic propositions using Boolean connectives and temporal operators \textbf{X} (next) and \textbf{U} (until). Derived operators include \textbf{F} (eventually) and \textbf{G} (always). LTL formulae are evaluated over infinite traces of sets of atomic propositions, with the standard satisfaction relation.

\subsection{Hidden Markov Models}
\label{HMM-subsect}
A Hidden Markov Model (HMM \cite{baum1966,Rabiner86anintroduction}) is defined as a tuple
$H=\langle S,A,V,B,\Pi\rangle$,
where $S$ is the set of hidden states, $A$ the state transition matrix, $V$ the observation alphabet, $B$ the observation probability matrix, and $\Pi$ the initial state distribution.
%
%In Section \ref{motiv-sec} we adapt a Mars rover example inspired by \cite{DBLP:conf/rv/StollerBSGHSZ11}, where commands issued to rover instruments generate observable events such as command dispatch, success, or failure, possibly with loss. 
Observations are naturally grouped into \emph{event types}, a notion implicit in the HMM literature and made explicit in PTEs.
In HMMs, the probability of being in a given state after observing a sequence is computed via the forward algorithm \cite{Rabiner:1990:THM:108235.108253}. This algorithm plays a central role in relating HMMs to PTE semantics.

\subsection{Related Work}
\label{sec:related}

One of the oldest papers dealing with partial observability in RV is ``Runtime Verification with State Estimation'' by Stoller et al. \cite{DBLP:conf/rv/StollerBSGHSZ11}. In that paper, the authors  introduce the concept of RV with State Estimation and apply it to estimate the probability that a temporal property is satisfied by a run of a program when monitoring overhead is
reduced by sampling, generating gaps
in the observed traces. They consider event traces as observation sequences of a Hidden Markov Model (HMM), use a HMM model of the
monitored program to fill in sampling-induced gaps in observation sequences, and extend the forward algorithm for HMM state estimation by Rabiner \cite{Rabiner:1990:THM:108235.108253} to
compute the probability that the property is satisfied by a trace execution.
Basin et al.  present a policy language with a three-valued semantics that supports reasoning about incomplete knowledge and handling disagreements \cite{DBLP:conf/rv/BasinKMZ12}, and Bartocci et al. \cite{DBLP:conf/rv/BartocciGKSSZS12} combine overhead control, RV with state estimation as presented in \cite{DBLP:conf/rv/StollerBSGHSZ11}, and predictive analysis. 
Along this line, Kalajdzic et al. \cite{DBLP:conf/rv/KalajdzicBSSG13} propose a way to control the trade-off between uncertainty and overhead in RV thanks to low-cost observations  of parts of the program state which are performed probabilistically at the end of observation gaps.
More recently, Joshi et al. \cite{DBLP:conf/sac/JoshiTF17} have presented an offline algorithm  that  identifies  whether a Linear time  Temporal
Logic (LTL) formula can be soundly monitored in the presence of a
transient
loss of events in a trace and constructs a monitor accordingly; Babaee et al. \cite{DBLP:conf/sefm/BabaeeGF18} estimate the finite extensions of an event trace prefix using a HMM as predictive model and a Deterministic
Finite Automaton to express the extensions. Depending on the given property, the extensions may specify the prefixes that satisfy the property (good extensions) or violate it (bad extensions). Ferrando et al.~\cite{DBLP:conf/sefm/FerrandoM22} tackle runtime verification under \emph{imperfect information} by making partial observability explicit at the level of atomic propositions. Their key idea is to model what the monitor can and cannot distinguish via \emph{indistinguishability sets} of propositions, and to compile an LTL formula into an \emph{explicit} version whose atoms are tagged with witnesses of truth and falsity. The paper also provides an engineering-oriented monitoring workflow and validates it on a robotic case study. Finally, Cimatti et al. \cite{DBLP:conf/rv/CimattiTT19} generalise the RV framework based on propositional LTL with both future and past temporal operators to monitor partially observable systems. Models of the system under scrutiny are used as assumptions for reasoning on the non-observable or future behaviours of the system.

%%%%%%%%%%%%%%%%%%%%%%%%%%%%%%%%%%%%%%%%%%%%%%%%%%%%%%%%%%%%%%%%%%%%%%%%%%%%%%%

\section{Motivating Scenario: The Lonely Mars \co}
\label{motiv-sec}

\begin{figure*}[!htb]
\begin{center}
\includegraphics[width=0.6\linewidth]{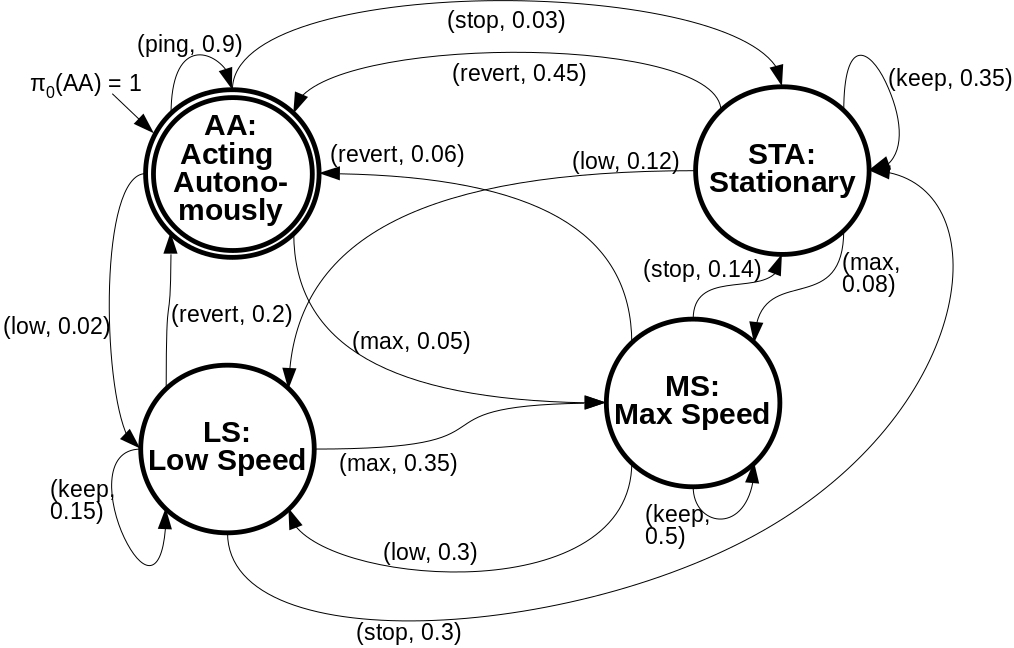}
\end{center}
\caption{The PFSA modelling COGC.}\label{PFSA-fig}
\end{figure*}

Let us consider the following scenario, inspired to the {\em Curiosity} mission.
On August 6, 2012, \co\ safely reaches Mars and begins its autonomous mission.
Although \co\ can operate independently, mission control (\gc) may intervene at any time.
Due to the high communication latency between Earth and Mars and the cost of transmission,
\gc\ follows a strict protocol: every 30 minutes it sends a control message to \co.

Most messages are lightweight \emph{ping}s.
Occasionally, \gc\ sends commands to change speed (\texttt{low} or \texttt{max}),
to stop (\texttt{stop}), or to explicitly return control (\texttt{revert});
a \texttt{keep} message may be used to maintain the current state.
The resulting communication protocol, called \textsc{COGC}, is ongoing and has no terminal state.

\textsc{COGC} can be naturally represented as a Probabilistic Finite State Automaton (PFSA),
whose states correspond to operational modes
(acting autonomously, stopped, low speed, maximum speed),
and whose transitions are labelled by control messages with associated probabilities.
From each state, the probabilities of outgoing transitions sum to~1
(Figure~\ref{PFSA-fig}).

The same protocol can also be represented using Trace Expressions (TEs)~\cite{DBLP:conf/birthday/AnconaFM16},
where protocol states are expressions and transitions are driven by observed communicative events.

The COGC protocol can hence be represented by the AA (for ``Acting Autonomously'') TE depicted in Figure \ref{ex:AA}.
\begin{figure*}[!htb]
\begin{center}
\begin{math}
\begin{array}{c}
AA = (gc \transmsg{ping} co ~\prefixop~ AA) ~\orop~ (gc \transmsg{stop} co ~\prefixop~ STA) ~\orop~ (gc \transmsg{low} co ~\prefixop~ LS) ~\orop~ (gc \transmsg{max} co ~\prefixop~ MS)
\\[3ex]

STA = (gc \transmsg{keep} co ~\prefixop~ STA) ~\orop~ (gc \transmsg{revert} co ~\prefixop~ AA) ~\orop~ (gc \transmsg{low} co ~\prefixop~ LS) ~\orop~ (gc \transmsg{max} co ~\prefixop~ MS)
\\[3ex]

LS = (gc \transmsg{keep} co ~\prefixop~ LS) ~\orop~ (gc \transmsg{revert} co ~\prefixop~ AA) ~\orop~ (gc \transmsg{stop} co ~\prefixop~ STA) ~\orop~ (gc \transmsg{max} co ~\prefixop~ MS)
\\[3ex]

MS = (gc \transmsg{keep} co ~\prefixop~ MS) ~\orop~ (gc \transmsg{revert} co ~\prefixop~ AA) ~\orop~ (gc \transmsg{stop} co ~\prefixop~ STA) ~\orop~ (gc \transmsg{low} co ~\prefixop~ LS)
\\[3ex]
\end{array}
\end{math}
\end{center}
\raggedright This TE has four different states, $AA$, $STA$, $LS$, $MS$, and from each of them four distinct transitions to $AA$, $STA$, $LS$, $MS$ are possible. 
If we want to make explicit that the initial state of $COGC$ is $AA$ we may add the following equation:
$$
COGC = AA
$$
\caption{The $COGC$ protocol represented by the $AA$ (for ``Acting Autonomously'') Trace Expression.}\label{ex:AA}
\end{figure*}

The TE  representation  captures the protocol structure without needing to fix the number of states in advance; also, TEs are more expressive than FSA. For instance, protocols that require matching push and pop events, such as stack-based
buffering, cannot be represented by finite-state automata but are naturally captured
by Trace Expressions.
Nevertheless, TEs cannot represent probabilities, that PFSAs represent and that are needed to deal with partial observability.  

When communication is reliable, observing a message uniquely determines the next protocol state.
However, communication failures are unavoidable.
If \co\ does not receive a message at the expected time, it can only infer that some message
was sent but lost; we say that \co\ observes a \emph{gap}.

At this point, \co\ must reason probabilistically.
Suppose that \co\ has received a sequence of \texttt{ping} messages and remains
in its autonomous state.
If the next message is missing, the most plausible explanation is another \texttt{ping},
but speed changes or a stop command remain possible.
The protocol model is therefore essential to rank alternative explanations
and to estimate the current state.

In \emph{Probabilistic Trace Expressions} (PTEs) extend TEs by adding the definition of a probability distribution, associated with every event type $\theta$, depending on which other event types are feasible alternatives to  $\theta$. 
This approach supports both finite-state and infinite-state protocols
in a uniform and consistent manner.

%We refer to these respectively as probabilities due to \emph{observation} and to \emph{guessing}.
%Both perspectives are meaningful and will later be captured by two distinct semantics.

The PTE corresponding to $COGC$ is given by $AA$ plus the definition listed in Figure \ref{ex:probabilityAA}, which is coherent with the PFSA in Figure \ref{PFSA-fig}. W.r.t. our previous work, where probabilities were associated with transitions and hence needed a change in the TE syntax, this new approach to adding probabilities is incremental and modular. A PTE is just a plain TE, with no syntactic modifications, plus probabilities that are defined separately. 

\begin{figure*}
\begin{center}
\begin{math}
\begin{array}{c}
\eventSet = \{ping, stop, low, max, revert, keep\}
\\[3ex]

\Pi_{\{ping, low, max, stop\}}(ping) = 0.9; \Pi_{\{ping, low, max, stop\}}(low) = 0.02;
\\[3ex]

\Pi_{\{ping, low, max, stop\}}(max) = 0.05; \Pi_{\{ping, low, max, stop\}}(stop) = 0.03;
\\[3ex]

\Pi_{\{keep, revert, max, stop\}}(keep) = 0.15; \Pi_{\{keep, revert, max, stop\}}(revert) = 0.2; ...
\\[3ex]
\end{array}
\end{math}
\end{center}

\raggedright The subset $\{ping, low, max, stop\}$ uniquely identifies $AA$, $\{keep, revert, max, stop\}$ identifies $LS$, $\{keep, revert, low, stop\}$ identifies $MS$, and $\{keep, revert, low, max\}$ identifies $STA$. For example, $\Pi_{\{ping, low, max, stop\}}(ping) = 0.9$ means that the probability that $ping$ is observed, in a state where other possible observed actions are $low$, $max$, and $stop$, is $0.9$.
\caption{The PTE corresponding to $COGC$: the $AA$ TE plus probabilities.}\label{ex:probabilityAA}
\end{figure*}

\subsection{No More Lonely, Much Less Guesses}

After two years from the landing on Mars of the first \co, many more \co\ robots have been sent to Mars and wander on its surface. Communication among them is less expensive than communication with \gc\ and definitely faster. 

Let us suppose that when \gc\ wants to take the control over the \co\ operations, it broadcasts the same message $obey$ to all the \co s. If \co 1 observes a gap in a given time slot, it may ask the other \co s, or -- even better -- those which are closer to it, which message they received from \gc\ in the same time slot. By comparing their answers, \co 1 may make informed guesses on how to fill the gap. For example, if all the neighbouring \co s received messages different from $obey$, \co 1 may assume that the gap should not be filled by $obey$. But also in the  unfortunate case where other \co s observed a gap, which is indeed the case we will explore in Section \ref{impl-sec}, they can share the information on which messages they might have received in their current state, which in general should differ from agent to agent, along with their probabilities. 
By exchanging information on the sets of messages that could fill each \co's gap
%\footnote{The intersection operation makes sense for messages that are broadcast to all the agents, and for situations where there is some redundancy in the way events are observed, and by whom: the details will be clarified in Section \ref{impl-sec}.} 
and by managing probabilities, they can either reach a common understanding of the missing message or, at least, prune some of the open options.

%%%%%%%%%%%%%%%%%%%%%%%%%%%%%%%%%%%%%%%%%%%%%%%%%%%%%%%%%%%%%%%%%%%%%%%%%%%%%%%
\section{Probabilistic Trace Expressions}
\label{prob-trace-expr-sec}

As shown in the motivating scenario, the PTE syntax is the same as TE syntax, plus definition of probability. 

From a semantic point of view, two distinct probabilistic questions arise.
Given a sequence of observed messages and gaps, one may ask what is the probability
that the protocol is currently in a given state.
Alternatively, one may ask what is the probability that a specific sequence of messages
has occurred, including those that were not observed.

We refer to these respectively as probabilities due to \emph{observation} and to \emph{guessing}. Both perspectives are meaningful and are captured by two distinct semantics.

\subsection{Observation Semantics ({\em ObSem})}
\label{subsec:obsem}

A PTE represents the current protocol state after observing a sequence of events.
To track uncertainty, we define a \emph{PTE state} as a triple
$\globalState{\tau}{\pi}{obs}$,
where $\tau$ is a TE, $\pi$ is the probability of having reached $\tau$, and $obs$ is the observed trace.

Let $\tau_{init}$ be the initial TE. The initial PTE state is
$\globalState{\tau_{init}}{1}{\sigma}$, with $\sigma$ denoting the empty sequence of observed events.

Under {\em ObSem}, observing an event $\ev$ updates the state deterministically:
\[
\globalState{\tau}{\pi}{obs}
\transo{\ev}
\globalState{\tau'}{\pi}{obs~\ev}
\quad \text{if } \tau \trans{\ev} \tau'.
\]
Observed events have probability~1.

When a gap is observed, the protocol must advance, but the triggering event is unknown.
For each $\eventTy \in \poss(\tau)$:
\[
\globalState{\tau}{\pi}{obs}
\transo{gap}
\globalState{\tau'}{\Pi_{\poss(\tau)}(\eventTy)\cdot \pi}{obs~gap(\eventTy)},
\]
where $\tau \trans{\ev} \tau'$ and $\ev \in \eventTy$.
Thus, probability mass is distributed among all possible explanations for the gap.
Figure~\ref{trans-prob-states-fig_obs} summarises the semantics.
%\vspace{-0.5cm}
\begin{figure*}[!htb]
\begin{center}

\begin{math}
\begin{array}{c}
\Rule{obs-event}
{\tau \trans{\ev} \tau'}
{\globalState{\tau}{\pi_{tr}}{obs}
\transo{\ev}
\globalState{\tau'}{\pi_{tr}}{obs~\ev}  }
{}

\\[3ex]

\Rule{obs-gap}
{\tau \trans{\ev} \tau'}
{\globalState{\tau}{\pi_{tr}}{obs}
\transo{gap}
\globalState{\tau'}{\Pi_{\poss(\tau)}(\eventTy) * \pi_{tr}}{obs~gap(\eventTy)}  }
{\ev \in \eventTy}

\end{array}
\end{math}
\end{center}
\caption{Transition system for PTEs under {\em ObSem}.}\label{trans-prob-states-fig_obs}
\end{figure*}

Event types provide an abstraction that avoids enumerating all syntactic variants of events; gaps record event types rather than concrete events.

\subsection{Guess Semantics ({\em GuesSem})}

{\em GuesSem} propagates probabilities both for observed events and gaps.
Unlike {\em ObSem}, observed events are treated as probabilistic outcomes:
\[
\globalState{\tau}{\pi}{obs}
\transg{\ev}
\globalState{\tau'}{\Pi_{\poss(\tau)}(\eventTy)\cdot \pi}{obs~\ev},
\]
with an analogous rule for gaps.
This semantics corresponds to likelihood-based reasoning and aligns with HMM-style state estimation, as discussed in Section~\ref{hmmprop-subsect}.
Figure~\ref{trans-prob-states-fig_guess} presents Rule guess-event. Rule guess-gap is identical to Rule obs-gap and is not reported.
%\vspace{-0.5cm}
\begin{figure*}[!htb]
\begin{center}

\begin{math}
\begin{array}{c}
\Rule{guess-event}
{\tau \trans{\ev} \tau'}
{\globalState{\tau}{\pi_{tr}}{obs}
\transg{\ev}
\globalState{\tau'}{\Pi_{\poss(\tau)}(\eventTy) * \pi_{tr}}{obs~\ev}  }
{\ev \in \eventTy}
%
%\\[3ex]
%
%\Rule{guess-gap}
%
%{\tau \trans{\ev} \tau'}
%
%{\globalState{\tau}{\pi_{tr}}{obs}
%\transg{gap}
%\globalState{\tau'}{\Pi_{\poss(\tau)}(\eventTy) * \pi_{tr}}{obs~gap(\eventTy)}  }
%
%{\ev \in \eventTy}
\end{array}
\end{math}
\end{center}
\caption{Transition system for PTEs under {\em GuesSem}.}\label{trans-prob-states-fig_guess}
\end{figure*}
%%\vspace{-1cm}

\subsection{Nondeterminism and State Sets}

Even for deterministic TEs, gap transitions may yield multiple successor states.
We therefore lift PTE transitions from states to sets of states.
The functions $\transM{}$ and $\transS{}$ compute, respectively, the set of successor states from a single state and from a set of states, while $\transS{evs}$ denotes their transitive closure over a trace.
Figure~\ref{closure-fig} formalises these constructions.

\begin{figure*}[!htb]
\begin{center}

\begin{math}
\begin{array}{c}
\Rule{state-to-set}
{\gamma \trans{\anyg} \gamma_1~~~\gamma \trans{\anyg} \gamma_2~~~...~~~\gamma \trans{\anyg} \gamma_n}
{\gamma
\transM{\anyg}
\Gamma = \{ \gamma_1, \gamma_2, ..., \gamma_n \}}
{\{ \gamma_1, \gamma_2, ..., \gamma_n \} ~\textrm{are all and only the states reachable from}~ \gamma ~\textrm{via}~\trans{\anyg}}

\\[3ex]

\Rule{set-to-set}
{\gamma_1 \transM{\anyg} \Gamma_1~~~\gamma_2 \transM{\anyg} \Gamma_2~~~...~~~\gamma_n \transM{\anyg} \Gamma_n}
{\{ \gamma_1, \gamma_2, ..., \gamma_n \} \transS{\anyg} \Gamma_1 \cup \Gamma_2 \cup ... \cup \Gamma_n }
{}

\\[3ex]

\Rule{closure}
{\Gamma_0 \transS{\anyg_1} \Gamma_1 \transS{\anyg_2} ~~~...~~~ \Gamma_{n-1} \transS{\anyg_n} \Gamma_n }
{\Gamma_0 \transS{\anyg_1 ... \anyg_n} \Gamma_n}
{}

\\[3ex]

\Rule{closure-init}
{ \{ \globalState{\tau}{1}{\sigma} \} \transS{\anyg_1 ... \anyg_n} \Gamma_n }
{ \tau \transS{\anyg_1 ... \anyg_n} \Gamma_n}
{\sigma = {\textrm{empty sequence}}}
\end{array}
\end{math}

\end{center}
\caption{Rules for nondeterminism and transitive closure.}\label{closure-fig}
\end{figure*}

Under {\em ObSem}, probabilities reflect only uncertainty due to gaps; under {\em GuesSem}, they reflect both event generation and state reachability.
This distinction allows PTEs to support both belief revision from observations and likelihood-based trace analysis.

\section{The Theory: Exploring the Connections among PTEs, LTL, and HMM}
\label{sec:theory}

We investigate the theoretical connections between Probabilistic Trace Expressions (PTEs), Linear-Time Temporal Logic (LTL), and Hidden Markov Models (HMMs). After recalling the required background, we show how PTEs subsume HMM-style reasoning and support LTL satisfaction under partial observability.

\subsection{From HMMs to PTEs}
\label{hmmprop-subsect}

Manually defining the probability functions of a PTE can be error-prone. To mitigate this, we show how an existing HMM can be systematically translated into a PTE with equivalent generative behaviour, allowing learned or engineered probabilistic models to be reused.

The translation proceeds in two stages (HMM2PTE). First, the HMM is transformed into a Finite State Probabilistic Trace Expression (FSPTE), where each HMM state corresponds to a TE, and transition probabilities are encoded syntactically on prefixes. Since HMMs exhibit probabilistic branching that is observationally nondeterministic, event types causing ambiguity are renamed to preserve distinct probabilistic outcomes.

Second, the FSPTE is converted into a proper PTE by removing syntactic probabilities and defining the corresponding $\Pi_{\poss(\tau)}$ distributions. A default uniform distribution is used whenever needed.

This construction yields a PTE whose generative behaviour matches that of the original HMM.

\paragraph{Forward Algorithm Correspondence.}

The following result establishes the semantic correspondence between HMMs and PTEs under the {\em GuesSem} semantics.

\begin{theorem}\label{th:theo1}
Let $H$ be a HMM and let $\Gamma_0$ be the initial set of PTE states obtained via HMM2PTE, with probabilities given by the HMM initial distribution. For any observation sequence $\ev_1\ldots\ev_t$, the probability $\alpha_t(j)$ computed by the HMM forward algorithm equals the sum of the probabilities of all PTE states in $\Gamma_t$ whose TE corresponds to the HMM state $s_j$.
\end{theorem}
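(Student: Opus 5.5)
We argue by induction on the length $t$ of the observation sequence, with the standard forward recursion as the target. Recall that $\alpha_1(j)=\Pi(s_j)\,B_j(\ev_1)$ and $\alpha_{t+1}(j)=\bigl(\sum_i \alpha_t(i)\,A_{ij}\bigr)B_j(\ev_{t+1})$.

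We first fix the HMM2PTE encoding precisely. Each hidden state $s_i$ becomes a TE $\tau_i$ that is a disjunction of prefixes $\eventTy_{ij,v}\prefixop\tau_j$. Here the event type $\eventTy_{ij,v}$ is the renamed copy of observation $v$, tagged with source $i$ and target $j$. The distribution attached to the enabled set $\poss(\tau_i)$ is $\Pi_{\poss(\tau_i)}(\eventTy_{ij,v})=A_{ij}B_j(v)$. It sums to $1$ over $\poss(\tau_i)$ because $\sum_j A_{ij}\sum_v B_j(v)=1$. The renaming makes $\poss(\tau_i)$ unique to $s_i$, so this is a well-defined function of the enabled set. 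The initial set $\Gamma_0$ consists of the states $\globalState{\tau_i}{\Pi(s_i)}{\sigma}$. This matches the statement's phrase ``probabilities given by the HMM initial distribution''. We adopt the convention, implicit in the statement, that $\transS{}$ starts from an arbitrary weighted set rather than from the singleton of Rule closure-init.

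\textbf{Bookkeeping of states.} Different paths can reach the same TE $\tau_j$ with different histories, and these give distinct PTE states because the $obs$ components differ. Under GuesSem, however, an observed event $\ev$ is recorded as $\ev$ itself, not as its renamed type. So the same $obs$ can arise from different hidden paths, and the $\pi$ components would then collide. We treat $\Gamma_t$ as a multiset, or equivalently index states by their hidden path, so that probabilities are summed rather than merged. Define $\beta_t(j)=\sum\{\pi \mid \globalState{\tau_j}{\pi}{obs}\in\Gamma_t\}$.

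\textbf{Induction.} The claim to prove is $\beta_t(j)=\alpha_t(j)$.

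For the base case we must align indexing. With the state-emits-on-entry encoding above, one step from $\Gamma_0$ yields $\beta_1(j)=\sum_i \Pi(s_i)A_{ij}B_j(\ev_1)$. This is the forward variable for the convention where $\Pi$ is the distribution at time $0$. If the paper uses Rabiner's convention instead, we add a fresh initial TE $\tau_{init}=\bigvee_{j,v}\eventTy_{j,v}\prefixop\tau_j$ with weights $\Pi(s_j)B_j(v)$, and $\Gamma_0=\{\globalState{\tau_{init}}{1}{\sigma}\}$. Either way the base case follows by one application of Rule guess-event and Rule set-to-set.

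For the step, Rule set-to-set takes the union of the successors of each state in $\Gamma_t$. By Rule guess-event, a state $\globalState{\tau_i}{\pi}{obs}$ reaches $\tau_j$ on event $\ev_{t+1}$ exactly via the prefix $\eventTy_{ij,\ev_{t+1}}$. Its weight becomes $\pi\,A_{ij}B_j(\ev_{t+1})$. This holds because $\ev_{t+1}\in\eventTy_{ij,v}$ iff $v=\ev_{t+1}$, which is why the renamed types must still match the concrete event. Summing over all states with TE $\tau_i$, then over $i$, and using distributivity together with the induction hypothesis gives $\beta_{t+1}(j)=\sum_i\alpha_t(i)A_{ij}B_j(\ev_{t+1})=\alpha_{t+1}(j)$. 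Gaps are not in the statement. If needed, Rule guess-gap corresponds to marginalising $B_j$, giving $\sum_v B_j(v)=1$, and the same computation goes through.

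\textbf{Main obstacle.} The arithmetic is routine; the difficulty is making the construction and the semantics precise enough for the sum to be well defined. Three points need care. First, the renaming must keep every HMM branch as a distinct transition while each renamed type still contains the concrete observation. Second, $\Pi_{\poss(\tau)}$ must be well defined, which relies on distinct states having distinct enabled sets; we enforce this through the renaming. Third, and most delicate, Rule state-to-set speaks of \emph{sets} of states. Equal states reached along different paths would be collapsed, so $\Gamma_t$ must be read as a multiset, or states must be tagged with their path. We would make this reading explicit first, since without it the equality can fail.
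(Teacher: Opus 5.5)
Your proposal is correct and follows the same idea as the paper's proof. Under \emph{GuesSem} each HMM path is mirrored by a PTE path whose weight is the product of the transition-and-emission factors, and summing per hidden state reproduces the forward variable; you make this precise as an induction on the forward recursion with the explicit encoding $\Pi_{\poss(\tau_i)}(\eventTy_{ij,v})=A_{ij}B_j(v)$, and the points you flag (reading $\Gamma_t$ as a multiset or path-indexed family, aligning the time-$0$ versus Rabiner indexing, keeping renamed types matching the concrete observation, and implicitly reading the $\eventTy$ in Rule guess-event as the type of the prefix actually fired) are left implicit in the paper's sketch, the multiset reading being genuinely necessary since a set-based union can merge identical triples and lose probability mass.
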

\begin{proof}
The correspondence follows from the fact that, under {\em GuesSem}, probability mass is propagated by multiplying the probability of the current state with the probability assigned to the observed event type, exactly as in the emission and transition steps of the HMM forward algorithm. The HMM2PTE construction ensures that each HMM state is represented by a corresponding TE, and that each observation sequence induces the same set of probabilistic paths in the PTE as in the original HMM. Summing the probabilities of all PTE states corresponding to a given HMM state therefore yields the same value computed by the forward algorithm.
\end{proof}

This theorem shows that {\em GuesSem} generalises classical HMM state estimation, embedding it within a more expressive trace-based framework that supports partial observability and non-regular specifications.

\subsection{LTL Satisfaction via PTEs}
\label{ltlprop-subsect}

To check whether a protocol modelled as a PTE satisfies an LTL property $\phi$, even in the presence of observation gaps, we translate $\phi$ into the same formalism.

The process consists of two steps. First, $\phi$ is translated into a non-probabilistic TE $\tau_{np}(\phi)$ using the algorithm in \cite{DBLP:conf/birthday/AnconaFM16}. Second, $\tau_{np}(\phi)$ is transformed into a probabilistic TE $\tau(\phi)$ by associating a suitable family of probability distributions (typically uniform).

Given a protocol PTE $AIP$, satisfaction of $\phi$ is checked by composing the two specifications via intersection:
\[
\tau(\phi) \andop AIP.
\]
A trace is compliant if it satisfies both the protocol constraints and the LTL property.

This approach integrates LTL specifications, PTE protocol models, and HMM-derived probabilistic behaviour within a single, uniform framework. It generalises existing RV approaches with state estimation \cite{DBLP:conf/rv/StollerBSGHSZ11} while supporting richer, non-regular specifications.

\section{The Practice: Minding Gaps}
%: Implementation and Issues}
\label{impl-sec}

In this section we move from theory to practice by discussing the practical issues raised by filling observation gaps, the solution we propose to address such issues, and the assumptions underlying such solution.

% \begin{figure}
%   \centering
%   \includegraphics[width=0.35\linewidth]{imgsAAA/totallyDec.jpg}
%   \caption{Fully decentralised monitoring.}\label{fig:totallyDec}
% \end{figure}
%\begin{figure}
%  \centering
%  \includegraphics[width=0.38\linewidth]{imgsAAA/totallyCen.jpg}
%  \caption{Runtime monitoring.}\label{fig:totallyCen}
%\end{figure}
% \begin{figure}
%   \centering
%   \includegraphics[width=0.45\linewidth]{imgsAAA/partiallyDec.jpg}
%   \caption{Partially decentralised monitoring.}\label{fig:partiallyDec}
% \end{figure}

The discussion so far has implicitly assumed that each agent is able to monitor its own communicative behaviour.
This simplifying assumption is useful to introduce the PTE formalism and its semantics, but it does not capture
the setting of interest in this work, namely the monitoring of a multi-agent system (MAS) from a global perspective.
In the scenario we consider, communicative events are observed and analysed by an external monitoring component.
%Figure~\ref{fig:totallyCen} depicts the 
In the fully centralised case, a single monitor $m$ is responsible for
observing and reasoning about all communicative events occurring in the MAS.

As far as events are concerned, they can be an online stream, with each event verified against the PTE specification as soon as it is observed  (\emph{online RV}), or can be recorded on a log file and inspected after the system shutdown, or once every some time (\emph{offline RV}). In both scenarios there may be gaps, due to different reasons. In offline RV, gaps might be caused by event sampling, as usually done to reduce the monitor workload. In online RV, a gap indicates lack of information (a lost message, event or perception); in this case, the absence of information is due to technical constraints of the system or of the monitor observation capabilities rather than to optimisation purposes.   

Differently from \cite{DBLP:conf/rv/StollerBSGHSZ11}, to perform RV using PTEs we need that each gap represents one single unobserved event: if we have a sequence of two unobserved events, we must have two different gaps in the observed trace. 
% If, in the real system, this ``one event-one gap'' correspondence cannot be achieved, we can estimate the number of unobserved events that took place in a time slot $T$ by computing the average rate of the event generation $G$, and inserting $T*G$ gaps in the event trace. As an example, if the monitor pauses for 3 seconds and the average events generation rate is 4 events for second, the trace will have 12 consecutive gaps corresponding to what might have happened in $T$.

The \emph{set-to-set} semantic rule (Figure \ref{closure-fig}) generates a set of states each time it is applied. These states must be maintained by the monitor in its local knowledge base, to allow it to retrieve the current set of states, query each of them, and  update the knowledge base with newly generated states.

Unfortunately, a rule like \emph{set-to-set} suffers from state space explosion, in particular when there are many sources of nondeterminism. Each time a gap takes place, the monitor must make guesses on the possible actual events that the gap represents and save all the states generated by these guesses. A possibly huge logical tree-like structure with states as nodes, and moves from states to states as edges, represents these open possibilities. If RV takes place online, the exploration of this logical structure must  follow a breadth-first strategy (more space needed but possibly less time required to recognise that the trace is not compliant with the expected behaviour), as the final trace of events is unknown and the levels of the structure are generated and explored at the same time. When, instead, a log file is analysed offline, the trace in the log is already complete and the logical tree-like structure can be explored, looking for violations, following a depth-first search (less space needed, but the violation could be discovered after exploring all the entire tree-like structure).

Online RV is definitely more challenging: if the log file is analysed offline, after the system has completed its execution, discovering a violation with some delay is not an issue. But if RV takes place online, it must be performed efficiently and in such a way that violations are discovered as soon as possible, to immediately take actions to repair or stop the system.

\subsection{Filling the Gaps}
\label{sec:centralised}

\begin{figure*}
Let us consider  a MAS involving four \co\ agents: $\{ co1, co2,$ $co3,$ $co4 \}$.
To make the presentation more readable we consider singleton event types denoting the message passing and we write $coi \transmsg{m_k} coj ~\prefixop~\tau$ instead of $\eventTy\prefixop\tau$ where $\eventTy = \{ coi \transmsg{m_k} coj \}$.
The PTE $\tau$ is defined as
\begin{center}
\begin{math}
\begin{array}{c}
\tau = \tau_1 \orop \tau_2
\\[3ex]

\tau_1 = co1 \transmsg{msg_1} co2 ~\prefixop~ (co2 \transmsg{msg_2} co3 ~\prefixop~ \tau_1 ~\shuffleop~ co2 \transmsg{msg_3} co4 ~\prefixop~ \epsilon)
\\[3ex]

\tau_2 = co1 \transmsg{msg_4} co4 ~\prefixop~ (co3 \transmsg{msg_5} co4 ~\prefixop~ \epsilon ~\shuffleop~ co2 \transmsg{msg_3} co4  ~\prefixop~ \tau_2)
\\[3ex]

\Pi_{\{ co1 \transmsg{msg_1} co2, co1 \transmsg{msg_4} co4 \}}(co1 \transmsg{msg_1} co2) = 0.7; \Pi_{\{ co1 \transmsg{msg_1} co2, co1 \transmsg{msg_4} co4 \}}(co1 \transmsg{msg_4} co4) = 0.3;
\\[3ex]

\Pi_{\{ co2 \transmsg{msg_2} co3, co2 \transmsg{msg_3} co4 \}}(co2 \transmsg{msg_2} co3) = 0.6; \Pi_{\{ co2 \transmsg{msg_2} co3, co2 \transmsg{msg_3} co4 \}}(co2 \transmsg{msg_3} co4) = 0.4;
\\[3ex]

\Pi_{\{ co3 \transmsg{msg_5} co4, co2 \transmsg{msg_3} co4 \}}(co3 \transmsg{msg_5} co4) = 0.3; \Pi_{\{ co3 \transmsg{msg_5} co4, co2 \transmsg{msg_3} co4 \}}(co2 \transmsg{msg_3} co4) = 0.7
\\[3ex]

\Pi_{\ET S} = \Pi u_{\ET S}~\textrm{for any other}~\ET S
\end{array}
\end{math}
\end{center}

\noindent and the initial probability of $\tau$ is $1$.
\caption{Filling gaps in a centralised way (one monitor).}\label{ex:decentralisePTEex}
\end{figure*}

Let us consider the scenario illustrated in Figure \ref{ex:decentralisePTEex}. A monitor $M$  observing all the interactions among the agents starting from the state $\tau$ would behave in the following way. We identify the initial state of $M$ with $M_{0}$, s.t. $M_{0} = \globalState{\tau}{1}{\sigma}.$
Let us suppose that the first observed event is a gap. Starting from $\tau$,  the only two possible evolutions of the protocol are those where either $co1 \transmsg{msg_1} co2$ or $co1 \transmsg{msg_4} co4$. These evolutions may be formalised as

$$
M_{0} \transS{gap} M_{1} = \{ 
$$
$$
\globalState{co2 \transmsg{msg_2} co3 ~\prefixop~ \tau_1 ~\shuffleop~ co2 \transmsg{msg_3} co4 ~\prefixop~ \epsilon}{0.7}{\gap{co1\transmsg{msg_1}co2}}, 
$$
$$
\globalState{co3 \transmsg{msg_5} co4 ~\prefixop~ \epsilon ~\shuffleop~ co2 \transmsg{msg_3} co4  ~\prefixop~ \tau_2}{0.3}{\gap{co1\transmsg{msg_4}co4}}
$$
$$
\} 
$$

\noindent We omit the $o$ or $g$ subscript in $\transS{gap}$ since when gaps are observed, {\em ObSem} and {\em GuesSem} behave in the same way: $\transS{gap}$  could be either $\transS{gap}_o$ or $\transS{gap}_g$ and nothing would change. If another gap is observed, each state in $M_{1}$ can evolve in two different ways because of the shuffle, leading to

$$
M_{1} \transS{gap} M_{2} = \{ 
$$
$$
\globalState{\tau_1 ~\shuffleop~ co2 \transmsg{msg_3} co4 ~\prefixop~ \epsilon}{0.42}{\gap{co1\transmsg{msg_1}co2}~\gap{co2 \transmsg{msg_2} co3}}, 
$$
$$
\globalState{co2 \transmsg{msg_2} co3 ~\prefixop~ \tau_1}{0.28}{\gap{co1\transmsg{msg_1}co2}~\gap{co2 \transmsg{msg_3} co4}}, 
$$
$$
\globalState{co2 \transmsg{msg_3} co4  ~\prefixop~ \tau_2}{0.09}{\gap{co1\transmsg{msg_4}co4}~\gap{co3 \transmsg{msg_5} co4}}
$$
$$
\globalState{co3 \transmsg{msg_5} co4 ~\prefixop~ \epsilon ~\shuffleop~ \tau_2}{0.21}{\gap{co1\transmsg{msg_4}co4}~\gap{co2 \transmsg{msg_3} co4}}
$$
$$
\} 
$$

\noindent However, if rather than a gap, the event $co2 \transmsg{msg_2} co3$ is observed, then $M_1$ evolves back to a single term (a singleton set), leading to
$$
M_{1} \transS{co2 \transmsg{msg_2} co3} M_{2}' = \{ 
$$
$$
\globalState{\tau_1 ~\shuffleop~ co2 \transmsg{msg_3} co4 ~\prefixop~ \epsilon}{\pi}{\gap{co1\transmsg{msg_1}co2}~co2 \transmsg{msg_2} co3}, 
$$
$$
\} 
$$
with $\pi=0.42$, if \emph{GuesSem} is selected, while $\pi=0.7$, if \emph{ObSem} is. 

Note that, even though the second observed event allows us to discard the second term in $M_1$, the probability assigned to observing $co1 \transmsg{msg_1} co2$ as first event remains $0.7$. It cannot become $1.0$ because it is possible the first event was actually $co1 \transmsg{msg_4} co4$, and the second event a violation of the specification. 

It is easy to see that the number of states can rapidly grow, because one single monitor is in charge for the RV of all the MAS and takes care of all the possibilities that open up when gaps are observed. One approach to cope with state space explosion is to split the centralised monitor into a set of decentralised ones, each observing a portion of the MAS. 
%If DecAMon is used to compute such ``portions'', monitoring safety is guaranteed. Since each decentralised monitor has to make its guesses about gaps, when a gap is observed there may be different opinions about its possible values. With respect to a centralised approach,  different perspectives due to decentralisation need to be managed through synchronisation between the monitors, which generates some communication overhead. The experimental results presented in Section \ref{experiments-sec} show that the overhead cost is worth paying. 

% \subsection{From Centralised to Decentralised Gaps Management}
% \label{sec:decentralised}

The most challenging situation for a decentralised monitoring process  is the one where all the monitors ``observe a gap'', as they cannot help each other by sharing information on what they saw. Other situations are a special case of this one, and less challenging: if some monitors observe events instead of gaps, they can provide valuable information to other monitors to reduce the state space due to guesses associated with gaps.

\subsection{Implementation Details}
\label{sec:impldetails}

Proof of concepts of the algorithms presented in this section have been implemented using SWI-Prolog\footnote{\url{http://swi-prolog.org/}, accessed on \today.}. The code is available from \url{https://github.com/RMLatDIBRIS/ProbabilisticTraceExpressions} under GNU General Public License v3.0\footnote{\url{https://www.gnu.org/licenses/gpl-3.0.en.html}, accessed on \today.}.

PTEs can be easily modelled as Prolog terms; by exploiting syntactic equations where the same logical variable appears both to the left and to the right of the ``='' syntactic equality symbol,  recursive PTEs can be defined. This feature is supported by most Prolog implementations, including SWI-Prolog,  and allows us to define the PTEs shown in the examples provided so far with almost the same syntax used in the paper. Implementing $\Pi$ in Prolog is not an issue, as it is represented as a set of facts. 

The adoption of Prolog is a winning choice not only for representing PTEs, but also for implementing their semantics and for manipulating them. 
Thanks to Prolog's rule-based, declarative interpretation, the rules defining PTE  operational semantics have a one-to-one correspondence with Prolog clauses: backtracking and ``all-solutions'' predicates are powerful tools to deal with the generation of multiple PTE states, when gaps or intrinsic nondeterminism coming from HMMs transformed into PTEs make more states reachable from a given one   (\emph{set-to-set} rule). 
%Finally, as we already anticipated, the SWI-Prolog coinduction library allows us to manipulate cyclic terms avoiding loops in the execution. 

A SWI-Prolog PTE-driven monitor observing events taking place in the system under scrutiny, and checking whether they comply with the PTE or not, can be automatically generated from the PTE representation. 
Connectors with such SWI-Prolog PTE-driven monitors exist both for MASs \cite{DBLP:conf/idc/BriolaMA14} and for other systems, including IoT \cite{DBLP:journals/corr/abs-1802-01790,DBLP:conf/icwe/LeottaAFORR18} and object oriented applications \cite{DBLP:conf/ecoop/AnconaFFM17}. 

\subsection{Thoughts on the Origins of Gaps}

Before moving on with the conclusions, it is important to linger a bit on what kind of scenarios can bring to having gaps in the trace of events.

In Section~\ref{motiv-sec}, we presented a motivating scenario involving the Mars \co. In such scenario, the need of probabilistic specifications was determined by the possible lack of information deriving from the disruption of the signal, delays in the communication, and so on. For these reasons, the event traces to be analysed at runtime contained gaps, which represented the lack of information. Even though this was a key example of why gaps can be present in a trace of events, it is not the only one. Along with the lack of information, other reasons can be found for having gaps in the trace of events. Specifically, gaps can derive by uncertainty over the observed events. This can happen in scenarios where the trace of events has been obtained by combining multiple traces together, with each trace representing a different viewpoint of the same scenario. For instance, the system execution has been observed from different perspectives, and there are points of conflict, where two (or more) observers disagree on what event has been observed in a certain point of the trace. When this happens, we find ourselves in a case of uncertainty; since we do not know which of the observers is right (if any). Such uncertainty is not that different from the lack of information we showed for \co, since there are points in the trace where no certainty of which event has been observed are present. In these scenarios, PTE can be of help, by handling the uncertainty as gaps of information.

Let us make an abstract example, where we have three different event traces, generated by three different observers of the same scenario. The three traces are:
$$
ev_1~ev_2~ev_3~ev_4;~~~ 
ev_1~ev_5~ev_3~ev_4;~~~
ev_1~ev_2~ev_3~ev_6
$$

Now, as we can see, the three traces (named ``multi-trace'', that could be partially observed as in \cite{DBLP:journals/jot/MaheBGLG24}, or just inconsistent as above) differ from each other for some event. Specifically, the first one differs from the second one w.r.t. the second event in the trace, the first one differs from the third one w.r.t. the fourth event, and finally, the second one differs from the third one w.r.t. the second and fourth event.
We can combine these three traces to generate a single trace with gaps, as follows:
$$
ev_1~gap~ev_3~gap
$$
where the certain events are preserved (the ones all observers agree on), and the uncertain events are replaced with gaps (the ones at least two observers disagree on).

This approach, that we proposed and implemented - via a proof of concept - in 2021 \cite{DBLP:journals/scp/MaheBGG25}, is similar to the Partial Order Reduction technique proposed by and consisting in selecting a one-unambiguous action (as a unique first step to a linearization) by confronting multi-traces with interaction specifications.

\section{Conclusions}
\label{concl-sec}

This paper addressed the problem of runtime verification of multi-agent interaction protocols under partial observability.
We proposed \emph{Probabilistic Trace Expressions} (PTEs) as a uniform modelling and monitoring framework that integrates probabilistic reasoning into Trace Expressions, allowing uncertainty due to missing or unobservable events to be handled explicitly, in 2022  \cite{DBLP:conf/eumas/AnconaFM22,DBLP:conf/cilc/AnconaFM22}.

In this paper, we keep the main ideas of PTEs but greatly simplify their syntax, and we propose two different semantics for them.

The main strength of PTEs lies in their expressive power combined with principled probabilistic semantics.
Unlike most approaches in the RV literature, which rely on LTL, finite-state automata, or Hidden Markov Models, PTEs can naturally capture non-regular interaction patterns through operators such as concatenation and shuffle, while still subsuming classical probabilistic models.
We showed that HMMs and LTL specifications can be systematically translated into PTEs, whereas the converse is not possible due to the limited expressive power of those formalisms.

Moving beyond the RV literature, we observed that similar challenges related to partial observability and uncertainty arise in the monitoring of normative multi-agent systems.
Although norms and PTEs originate from different research traditions, the overlap in the underlying problems suggests that a deeper comparison between these formalisms is a promising direction for future work.

Planned extensions include the introduction of parameters into PTEs, the formal analysis of the computational complexity of the proposed monitoring algorithms, and experimentation in real-world settings.
Finally, building on the integration of probabilities into Trace Expressions, we aim to extend the Runtime Monitoring Language (RML)~\cite{DBLP:journals/scp/AnconaFFM21} with a probabilistic variant compiling into PTEs, further supporting expressive and uncertainty-aware runtime verification.

\printbibliography

\end{document}